\documentclass[conference]{IEEEtran}
\IEEEoverridecommandlockouts
\usepackage{cite}
\usepackage{hyperref}
\usepackage{amsmath,amsthm,amssymb,amsfonts}
\usepackage{cleveref}
\usepackage{algorithmic}
\usepackage{graphicx}
\usepackage{mathrsfs}
\usepackage[all]{xy}
\usepackage{braket}
\usepackage{bbm}
\usepackage{textcomp}
\usepackage{xcolor}
\newtheorem{theorem}{Theorem}

\newtheorem{lemma}[theorem]{Lemma}
\theoremstyle{definition}
\newtheorem{definition}[theorem]{Definition}

\newcommand{\tr}{\operatorname{tr}}
\newcommand{\norm}[1]{\left\lVert #1 \right\rVert}
\newcommand{\Dtr}{D_{\mathrm{tr}}}
\newcommand{\tracenorm}[1]{\norm{#1}_1}
\newcommand{\tracedist}[2]{\tfrac{1}{2}\,\tracenorm{#1 - #2}}

\def\BibTeX{{\rm B\kern-.05em{\sc i\kern-.025em b}\kern-.08em
    T\kern-.1667em\lower.7ex\hbox{E}\kern-.125emX}}
\begin{document}

\title{Unbounded length minimal synchronizing words for quantum channels over qutrits*\\
\thanks{This work was partially supported by a grant from the Simons Foundation (\#704836
to Bj{\o}rn Kjos-Hanssen).}
}

\author{\IEEEauthorblockN{Bj{\o}rn Kjos-Hanssen}
\IEEEauthorblockA{\textit{Department of Mathematics} \\
\textit{University of Hawai\textquoteleft i at M\=anoa}\\
Honolulu, USA \\
bjoernkh@hawaii.edu}
\and
\IEEEauthorblockN{Swarnalakshmi Lakshmanan}
\IEEEauthorblockA{\textit{Department of Mathematics} \\
\textit{University of Hawai\textquoteleft i at M\=anoa}\\
Honolulu, USA \\
slaks@hawaii.edu}
}

\maketitle

\begin{abstract}
Grudka, Karczewski, Kurzynski, Stempin, W\'ojcik and Wojcik (2025) constructed quantum channels with synchronizing words of length 3 for qutrits. We extend their result to arbitrarily long minimal synchronizing words, providing a contrast to \v{C}ern\'y's conjecture for finite automata.
\end{abstract}

\begin{IEEEkeywords}
synchronizing word, quantum channel, \v{C}ern\'y's conjecture
\end{IEEEkeywords}

\section{Introduction}
Grudka et al.~\cite{10.1088/1367-2630/ae16ca}.
constructed quantum channels with synchronizing words of length 3 for qutrits. We extend their result to arbitrary long minimal synchronizing words.

\begin{definition}
A deterministic automaton (DA) $M$ consists of a set of states $Q$, an alphabet $\Sigma$,
a start state $q_0$, a set of accepting states $F\subseteq Q$, and a transition function $\delta:Q\times\Sigma\to Q$.
If $Q$ and $\Sigma$ are finite sets then $M$ is a deterministic finite automaton (DFA).
Let $\lambda$ denote the empty word and let $\Sigma^*$ be the set of finite words over $\Sigma$.
The function $\delta$ induces a map $\delta^*: Q \times \Sigma^* \to Q$ by
\[
\delta^*(q,\lambda)=q,\quad \delta^*(q,wa)=\delta(\delta^*(q,w),a)
\]
where $w\in\Sigma^*$ and $a\in\Sigma^*$.
\end{definition}
\begin{definition}
A \emph{synchronizing word} in a DFA over an alphabet $\{A,B\}$
is a word $w\in\{A,B\}^*$ such that
$\delta^*(q,w)=\delta^*(r,w)$ for all states $r$ and $q$.
\end{definition}

Our study is motivated by \v{C}ern\'y's conjecture \cite{MR168429} on the shortest synchronizing words in DFAs.
J\'an \v{C}ern\'y demonstrated that for some DFAs with $q$ many states, the shortest synchronizing word has length $(q-1)^2$.
The longstanding \emph{\v{C}ern\'y's conjecture} asserts that whenever a DFA has a synchronizing word,
it has one of length at most this bound. Our result here shows that if we count only basis states (i.e., dimension) then in a quantum setting \v{C}ern\'y's conjecture fails.
The reason is simply that there is no finite upper bound on the minimal synchronizing word length at all, as a function of dimension.

We prove in \Cref{mainthm} that for each $l\in\mathbb N$ there is a quantum channel of qutrits
(hence having $q=3$ basis states) in which there is a synchronizing word, but none of length at most $l$.
\section{Construction}
Our construction is a modification of one due to Grudka et al.
Let
\[
	A_1 = \begin{pmatrix} 0&0&0\\ 1&0&0\\ 0&0&0\end{pmatrix},\quad
	A_2 = \begin{pmatrix}
	0 & 0 & 0\\
	0 & 0 & -1\\
	0 & 1 & 0
	\end{pmatrix}
\]

and define a Kraus operator $A$ by $A(\rho) = A_1 \rho A_1^* + A_2 \rho A_2^*$.

For each positive integer $n$, let
\[
	B = B_n = \begin{pmatrix}
	\cos\theta & -\sin\theta & 0\\
	\sin\theta & \cos\theta & 0\\
	0 & 0 & 1
	\end{pmatrix}
\]
where $\theta=\pi/(2n)$,
and define the Kraus operator $B$ by $B(\rho)=B\rho B^*$.

Let $e_1,e_2,e_3$ be the standard basis vectors for $\mathbb C^3$, so that for example
\[
\ket{e_2}\bra{e_2}=
\begin{pmatrix}
	0 & 0 & 0\\
	0 & 1 & 0\\
	0 & 0 & 0
\end{pmatrix}
\]

\begin{definition}
For any positive integer $n$, $\mathscr M_n$ is the DA whose states are density matrices over $\mathbb C^3$, whose alphabet is $\{A,B_n\}$, with transition function
\[
\delta(\rho,C)=C(\rho)
\]
for $C\in\{A,B_n\}$.
The countable subautomaton $\mathscr M_n'$ is identical to $\mathscr M_n$ except that its set of states is
\[
Q = \{\delta^*(\ket{e_1}\bra{e_1}, w) ;\, w\in\{A,B_n\}^*\}
\]
i.e., density matrices reachable from $\ket{e_1}\bra{e_1}$ using $A$ and $B_n$.
\end{definition}

Notice that both $A$ and $B$ are quantum channels, in particular trace-preserving and satisfying
\begin{equation}\label{next-best}
\sum A_i^*A_i = \mathbbm{1} = B^*B.    
\end{equation}

If $A$ and $B$ were both injective they could not generate a synchronizing word.
For example, $\mathscr{M}'_1$ in its entirety looks like this:
    \[
    \xymatrix{
    {\ket{e_3}\bra{e_3}}\ar@{<->}[r]_A\ar@(dl,dr)_B &{\ket{e_2}\bra{e_2}}\ar@{<->}@/_/[r]_B &{\ket{e_1}\bra{e_1}}\ar@/_/[l]_A & 
    }
    \]
It is verifiable by inspection that the word $ABA$ is synchronizing, and the non-injectivity of $A$ is crucial for this to happen.
(See \Cref{fig:7} for a finite part of
the infinite, and more complicated, DA $\mathscr{M}_2'$.)
The fact that \eqref{next-best} can be maintained is a surprising next best thing after injectivity.

\section{Main results}

Let a positive integer $l$ be given.
Our strategy is simple:
 If we take the angle $\theta$ very small compared to $l$, then the operator $B$ is close to the identity $I$, hence any word of length $l$ or less is approximately just a power $A^p$.
 As $A$ permutes $\ket{e_2}\bra{e_2}$ and $\ket{e_3}\bra{e_3}$, it follows that there is no synchronizing word of length $l$ or less.

To speak precisely about $B$ being close to $I$ we use the trace distance.

\begin{definition}
The trace distance between $\rho$ and $\sigma$ is given by
\( \Dtr(\rho,\sigma) = \tracedist{\rho}{\sigma} = \tfrac{1}{2}\tr\!\big(\sqrt{(\rho - \sigma)^2}\big) \).
\end{definition}
We recall the well-known \Cref{contracts}.
\begin{lemma}\label{contracts}
For any quantum channel $\Phi$,
\[
\Dtr(\Phi(\rho),\Phi(\sigma))\le
\Dtr(\rho,\sigma).
\]    
\end{lemma}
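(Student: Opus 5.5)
The plan is to prove the contraction property of the trace distance, \Cref{contracts}, through the variational characterization of the trace norm of a Hermitian traceless operator. Set $X=\rho-\sigma$. Then $X$ is Hermitian with $\tr X=0$, since both density matrices have trace one. Diagonalize $X$ and split it into its positive and negative parts, $X=P-N$. Here $P,N\ge 0$ have orthogonal supports, and $\tr P=\tr N$ because $\tr X=0$. With this decomposition $\tracenorm{X}=\tr P+\tr N=2\tr P$, so $\Dtr(\rho,\sigma)=\tr P$.

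Next apply $\Phi$. By linearity, $\Phi(\rho)-\Phi(\sigma)=\Phi(P)-\Phi(N)$. Complete positivity, or even just positivity, of the Kraus form $\Phi(Y)=\sum_i K_i Y K_i^*$ gives $\Phi(P)\ge 0$ and $\Phi(N)\ge 0$. For positive operators the triangle inequality together with $\tracenorm{Z}=\tr Z$ yields
\[
\tracenorm{\Phi(P)-\Phi(N)}\le \tr\Phi(P)+\tr\Phi(N).
\]
Trace preservation, which is exactly the condition $\sum_i K_i^*K_i=\mathbbm{1}$ recorded in \eqref{next-best}, then gives $\tr\Phi(P)+\tr\Phi(N)=\tr P+\tr N=\tracenorm{X}$. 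Dividing by $2$ yields the claim.

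The only step needing care is the decomposition $X=P-N$ and the identity $\tracenorm{X}=\tr P+\tr N$. Both follow directly from the spectral theorem for the Hermitian matrix $X$: take $P$ and $N$ to be the sums over the positive and negative eigenvalues of $|\lambda|$ times the corresponding eigenprojections.

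There is also the need to confirm that $\tracenorm{\cdot}$ obeys the triangle inequality. If one prefers not to assume this, one can instead use $\tracenorm{Z}=\max_{\norm{U}\le 1}|\tr(UZ)|$ for Hermitian $Z$, with the maximum attained at $U$ equal to the sign of $Z$. This immediately gives both subadditivity and the bound $|\tr(U\Phi(P))|\le\tr\Phi(P)$ for positive $\Phi(P)$. Since the lemma is stated as well known, citing these standard norm facts should suffice.
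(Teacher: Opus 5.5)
The paper gives no proof of this lemma; it is only recalled as well known, so there is no in-paper argument to compare against. Your proof is correct and is the standard one, essentially the textbook argument in Nielsen and Chuang. That version instead bounds $\tr Q\bigl(\Phi(P)-\Phi(N)\bigr)$ for the projector $Q$ onto the positive part of $\Phi(\rho)-\Phi(\sigma)$; your use of the triangle inequality for $\tracenorm{\cdot}$ is an equivalent shortcut. The key facts all check out:
\begin{itemize}
\item $|X|=P+N$ because the supports of $P$ and $N$ are orthogonal, so $\tracenorm{X}=\tr P+\tr N$.
\item $\Phi(P)\ge 0$ and $\Phi(N)\ge 0$ by positivity.
\item $\tr\Phi(Y)=\tr\bigl(\sum_i K_i^*K_i\bigr)Y=\tr Y$ by trace preservation.
\end{itemize}
One benefit of your route is that it uses only positivity and trace preservation, so it proves the contraction for any positive trace-preserving linear map. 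In particular it applies to $B_n$, whose single Kraus operator is orthogonal, and to $A$, since $A_1^*A_1+A_2^*A_2=\mathbbm{1}$. It also covers their compositions, which is how the lemma is used in the paper's product estimate for $\prod_i A^{a_i}B^{b_i}$.
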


\begin{lemma}[Hölder inequality for Schatten norms \cite{MR2978290} (5.6)]\label{schatten}
Let $A,B$ be complex (or real) matrices, and let $1 \le p,q,r \le \infty$ satisfy
\[
  \frac{1}{r} \;\le\; \frac{1}{p} + \frac{1}{q}.
\]
Then
\[
  \|AB\|_r \;\le\; \|A\|_p \, \|B\|_q,
\]
where $\|\cdot\|_p$ denotes the Schatten $p$-norm,
\[
  \|X\|_p = 
  \begin{cases}
    \bigl(\operatorname{tr}[(X^* X)^{p/2}]\bigr)^{1/p}, & 1 \le p < \infty, \\[6pt]
    \max_i s_i(X), & p = \infty,
  \end{cases}
\]
and $s_i(X)$ are the singular values of $X$.

In particular, for $p = q = \infty$ and $r = 1$,
\[
  \|A B\|_1 \le \|A\|_\infty \, \|B\|_1.
\]
\end{lemma}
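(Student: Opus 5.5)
The statement just given is the Hölder inequality for Schatten norms. It is cited from the literature, so the paper only uses it. Still, here is how I would prove the special case the paper needs, $\|AB\|_1 \le \|A\|_\infty \|B\|_1$, and then the general statement.

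For the special case I would use the variational characterization of the trace norm:
\[
\|X\|_1=\max_{U\ \text{unitary}}|\operatorname{tr}(UX)|.
\]
This follows from the polar decomposition $X=V|X|$: taking $U=V^*$ gives $\operatorname{tr}|X|$. The reverse inequality comes from writing $X=\sum_i s_i\, u_iv_i^*$ (singular value decomposition) and noting $|v_i^*Uu_i|\le 1$. With this in hand, fix a unitary $U$ with $\|AB\|_1=|\operatorname{tr}(UAB)|$ and write $B=\sum_i s_i(B)\,x_iy_i^*$. Then
\[
|\operatorname{tr}(UAB)|\le \sum_i s_i(B)\,|y_i^*UAx_i|\le \|A\|_\infty\sum_i s_i(B)=\|A\|_\infty\|B\|_1,
\]
since $\|UAx_i\|\le\|A\|_\infty$ for unit vectors $x_i,y_i$. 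This is all the paper needs later, where $A$ will be a Kraus operator or $B_n - I$ and the other factor a difference of density matrices.

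For the general statement I would proceed in two steps.

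\emph{Reduction to $1/r=1/p+1/q$.} Schatten norms are nonincreasing in the index, since $\ell^p$ norms of the singular value vector decrease as $p$ grows. So if $1/r<1/p+1/q$, define $r'\le r$ by $1/r'=1/p+1/q$. If $1/p+1/q>1$, first enlarge $p$ or $q$, which only decreases $\|A\|_p$ or $\|B\|_q$, until the sum of reciprocals equals $1/r$. The inequality then follows from $\|AB\|_r\le\|AB\|_{r'}$.

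\emph{The equality case.} Here I would use the singular value majorization of Horn: $\prod_{i\le k}s_i(AB)\le\prod_{i\le k}s_i(A)s_i(B)$ for all $k$. This gives weak majorization of $\log s(AB)$ by $\log s(A)+\log s(B)$. Applying the convex increasing function $t\mapsto e^{rt}$ yields
\[
\sum_i s_i(AB)^r\le\sum_i s_i(A)^r s_i(B)^r.
\]
The classical scalar Hölder inequality with exponents $p/r$ and $q/r$ then finishes the proof.

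The main obstacle is Horn's inequality. I would prove it via antisymmetric tensor powers: $\prod_{i\le k}s_i(X)=\|\Lambda^k X\|_\infty$, and $\Lambda^k(AB)=\Lambda^kA\,\Lambda^kB$, so the product bound reduces to submultiplicativity of the operator norm.
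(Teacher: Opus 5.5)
The paper does not prove this lemma. It quotes it from the literature and uses only the trace-norm form needed in Lemma~\ref{chat}. Your proposal is therefore a genuinely different, self-contained route, and it is correct.

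Your direct argument for $\|AB\|_1\le\|A\|_\infty\|B\|_1$, via polar decomposition and an SVD of $B$, is elementary and is essentially all the paper requires. Two small corrections to your remark about how the paper uses it:
\begin{itemize}
\item Lemma~\ref{chat} also needs the mirrored bound $\|YZ\|_1\le\|Y\|_1\|Z\|_\infty$ (for $\rho(B^T-I)$ and for $(B-I)\rho\,B^T$). This follows from your bound by taking adjoints.
\item The trace-class factor there is $\rho$ itself, not a difference of density matrices.
\end{itemize}

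Your general argument is the standard textbook proof: monotonicity in the index, Horn's inequality via $\Lambda^k$, weak log-majorization, then scalar H\"older with exponents $p/r$ and $q/r$. It costs more machinery than a citation but delivers the whole range of exponents. A few loose ends remain:
\begin{itemize}
\item The exponential step does not literally cover $p=q=r=\infty$. That case is just the $k=1$ instance of Horn, i.e.\ submultiplicativity of the operator norm.
\item Zero singular values make the logarithms $-\infty$. Handle this by continuity, or by stating log-majorization for nonnegative vectors.
\item In the reduction you may need to enlarge both $p$ and $q$, not just one of them. Alternatively, the equality-case argument works verbatim for $0<r'<1$ (with the $\ell^{r'}$ quasi-norm), so no case split is needed.
\item For rectangular matrices, replace the unitary $U$ by the partial isometry from the polar decomposition.
\end{itemize}

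Finally, the ``in particular'' clause of the statement is mislabeled. For $p=q=\infty$, $r=1$ the hypothesis $1/r\le 1/p+1/q$ fails, and the corresponding conclusion $\|AB\|_1\le\|A\|_\infty\|B\|_\infty$ is false (take $A=B=I$). The displayed inequality is really the case $(p,q,r)=(\infty,1,1)$. That is exactly what you proved directly, so your argument is unaffected, but it is worth flagging.
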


We now obtain a trace-distance bound for a rank-one channel close to the identity.
\begin{lemma}\label{chat}
Let \( B \) be a \( 3\times3 \) real matrix, and define a quantum channel
\[
\Phi(\rho) = B\rho B^{T}.
\]
Assume that every entry of \( B \) differs from the identity by at most \( \epsilon\le 2/9 \):
\[
B = I + E, \qquad |E_{ij}| \le \epsilon.
\]
Then $\Dtr(\rho,\Phi(\rho))\le 4\epsilon$.    
\end{lemma}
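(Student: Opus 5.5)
Write $\Phi(\rho)-\rho = (I+E)\rho(I+E)^T-\rho = E\rho + \rho E^T + E\rho E^T$. The triangle inequality for the trace norm then gives
\[
\tracenorm{\Phi(\rho)-\rho} \le \tracenorm{E\rho}+\tracenorm{\rho E^T}+\tracenorm{E\rho E^T}.
\]
Each term will be bounded with \Cref{schatten}, using $p=q=\infty$, $r=1$, together with the analogous right-multiplication version $\|XY\|_1\le\|X\|_1\|Y\|_\infty$, which follows by taking adjoints. Since $\rho$ is a density matrix, $\tracenorm{\rho}=1$. This yields
\[
\tracenorm{\Phi(\rho)-\rho}\le 2\|E\|_\infty+\|E\|_\infty^2 .
\]
For the middle factor in the last term I apply Hölder twice: $\|E\rho E^T\|_1\le \|E\|_\infty\|\rho E^T\|_1\le\|E\|_\infty^2$, using $\|E^T\|_\infty=\|E\|_\infty$.

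Next I need an operator-norm bound on $E$ from the entrywise bound. The crude estimate is $\|E\|_\infty\le\|E\|_2=(\sum|E_{ij}|^2)^{1/2}\le 3\epsilon$, since there are nine entries each of size at most $\epsilon$. Alternatively one can use $\|E\|_\infty\le\sqrt{\|E\|_{1\to1}\|E\|_{\infty\to\infty}}\le 3\epsilon$; either route gives the same constant $3$.

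Combining the two steps gives
\[
\Dtr(\rho,\Phi(\rho))\le \tfrac12(6\epsilon+9\epsilon^2)=3\epsilon+\tfrac92\epsilon^2 .
\]
The hypothesis $\epsilon\le 2/9$ is exactly what makes $\tfrac92\epsilon^2\le\epsilon$. Hence the bound is at most $4\epsilon$, as claimed. This also explains the otherwise odd-looking constant $2/9$ in the statement.

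The only real care is needed in the constant bookkeeping: the factor $\tfrac12$ in $\Dtr$, and using the norm bound $3\epsilon$ rather than something weaker like $9\epsilon$. I expect the operator-norm estimate to be the step to get right. If the Frobenius bound were somehow insufficient, I would instead bound $\|E\rho\|_1$ directly via the spectral decomposition $\rho=\sum\lambda_k\ket{v_k}\bra{v_k}$, which gives $\|E\rho\|_1\le\sum\lambda_k\|Ev_k\|\le\max_{\|v\|=1}\|Ev\|\le 3\epsilon$. That argument reaches the same conclusion.

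Note that the trace-preservation and channel property of $\Phi$ are not actually needed for this estimate. Only $\tracenorm{\rho}=1$ is used.
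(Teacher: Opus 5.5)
Your proof is correct and essentially the paper's: the paper writes the difference as $(B-I)\rho B^T + \rho(B^T-I)$ and uses $\|B\|_\infty \le 1+\|E\|_\infty$, which is just a regrouping of your three terms $E\rho + \rho E^T + E\rho E^T$, and it reaches the same bound $3\epsilon + \tfrac92\epsilon^2 \le 4\epsilon$ via the same Frobenius estimate $\|E\|_\infty \le 3\epsilon$. One cosmetic slip, which also appears in the paper's statement of \Cref{schatten}: the H\"older instance giving $\|XY\|_1 \le \|X\|_\infty \|Y\|_1$ is $p=\infty$, $q=r=1$, since the choice $p=q=\infty$, $r=1$ violates $1/r \le 1/p+1/q$.
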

\begin{proof}
We would like to bound the trace distance
\[
D_{\mathrm{tr}}(\rho,\Phi(\rho))
   = \tfrac{1}{2}\|\Phi(\rho)-\rho\|_1
   = \tfrac{1}{2}\|B\rho B^{T}-\rho\|_1.
\]

Write
\[
B\rho B^{T}-\rho
   = (B-I)\rho B^{T} + \rho(B^{T}-I).
\]

Applying the triangle inequality and \Cref{schatten}
\[\|XYZ\|_1 \le \|XY\|_1 \|Z\|_{\infty} \le \|X\|_{\infty}\,\|Y\|_1\,\|Z\|_{\infty},\]
we get
\[
\|B\rho B^{T}-\rho\|_1
   \le \|B-I\|_{\infty}\,\|\rho\|_1\,\|B\|_{\infty}
      + \|\rho\|_1\,\|B^{T}-I\|_{\infty}.
\]
Since \(\|\rho\|_1 = \operatorname{tr}\rho = 1\)
and \(\|B^{T}-I\|_{\infty} = \|B-I\|_{\infty}\),
\begin{equation}\label{boxed}
\|B\rho B^{T}-\rho\|_1
   \le \|B-I\|_{\infty}\,(\|B\|_{\infty} + 1).    
\end{equation}

For a \(3\times3\) matrix, the Frobenius norm satisfies
\[\|M\|_{\infty}\le \|M\|_F= \sqrt{\sum_{i,j}|M_{ij}|^2} \le 3\max_{i,j}|M_{ij}|.
\]
Hence, using \(|E_{ij}| \le \epsilon\),
\[
\|B-I\|_{\infty} = \|E\|_{\infty} \le \|E\|_F \le 3\epsilon.
\]
Also,
\[
\|B\|_{\infty} \le \|I\|_{\infty} + \|E\|_{\infty} \le 1 + 3\epsilon.
\]
Substituting into \eqref{boxed} gives
\[
\|B\rho B^{T}-\rho\|_1
   \le (3\epsilon)\bigl(1 + 1 + 3\epsilon\bigr)
   = 3\epsilon(2 + 3\epsilon).
\]
Finally,
\[
D_{\mathrm{tr}}(\rho,\Phi(\rho))
   = \tfrac{1}{2}\|B\rho B^{T}-\rho\|_1
   \le 3\epsilon + \tfrac{9}{2}\epsilon^2\le 4\epsilon.\qedhere
\]
\end{proof}

\begin{lemma}
For our choice of $B=B_n$,
the values $|E_{ij}|$ in \Cref{chat} are
bounded by $|\theta|$.    
\end{lemma}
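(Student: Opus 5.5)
We compute $E = B_n - I$ directly and bound its entries one at a time. From the definition of $B_n$,
\[
E = \begin{pmatrix}
\cos\theta - 1 & -\sin\theta & 0\\
\sin\theta & \cos\theta - 1 & 0\\
0 & 0 & 0
\end{pmatrix},
\]
so only three kinds of entries occur: $0$, $\pm\sin\theta$, and $\cos\theta-1$. The zero entries trivially satisfy $|E_{ij}|\le|\theta|$, so the claim reduces to two elementary real inequalities.

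For the off-diagonal entries we need $|\sin\theta|\le|\theta|$. We will obtain this from the mean value theorem: $\sin\theta-\sin 0=\theta\cos\xi$ for some $\xi$ between $0$ and $\theta$, and $|\cos\xi|\le 1$. Equivalently, $|\sin\theta|=\left|\int_0^\theta\cos t\,dt\right|\le|\theta|$.

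For the diagonal entries we need $|\cos\theta-1|\le|\theta|$. We will use the same argument: $1-\cos\theta=\int_0^\theta\sin t\,dt$, and $|\sin t|\le1$ gives $|1-\cos\theta|\le|\theta|$. Alternatively, $1-\cos\theta=2\sin^2(\theta/2)\le 2(\theta/2)^2=\theta^2/2$, which is at most $|\theta|$ whenever $|\theta|\le 2$. Since $\theta=\pi/(2n)\le\pi/2<2$, this alternative also applies. Combining the three cases gives $\max_{i,j}|E_{ij}|\le|\theta|$, which is the statement.

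There is no real obstacle; the only point requiring care is the diagonal entry, where the bound $|\theta|$ rather than $\theta^2/2$ is what is claimed, and either route above handles it. For the later application of \Cref{chat} we also note that $\epsilon=\theta=\pi/(2n)$ satisfies $\epsilon\le 2/9$ once $n\ge 8$, since $\pi/16\approx 0.196<2/9$. That is a remark about how the lemma will be used, not part of its proof.
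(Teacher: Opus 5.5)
Your proof is correct and matches the paper's: both observe that the entries of $E=B_n-I$ are $0$, $\cos\theta-1$, and $\pm\sin\theta$, and conclude from $|\sin\theta|\le|\theta|$ and $|\cos\theta-1|\le|\theta|$. The paper simply asserts these two inequalities, so your integral and mean-value justifications only supply routine details it omits.
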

\begin{proof}
These values are 0,
$\cos\theta-1$, and $\pm\sin\theta$,
and we have $|\sin\theta|\le |\theta|$
and $|\cos\theta-1|\le |\theta|$.    
\end{proof}

\begin{lemma}\label{oct30}
    If $\Dtr(B(\rho),\rho)\le \delta$ for
    all $\rho$ then for all integers $s\ge 0$ and all $\rho$,
    $\Dtr(B^s(\rho),\rho)\le s\cdot\delta$.
\end{lemma}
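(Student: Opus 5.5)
We argue by induction on $s$, combining the triangle inequality for the trace distance with the contraction property of \Cref{contracts}. The base case $s=0$ is immediate: $B^0(\rho)=\rho$, so $\Dtr(B^0(\rho),\rho)=0$. For $s=1$ the claim is exactly the hypothesis.

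For the inductive step, suppose $\Dtr(B^s(\rho),\rho)\le s\delta$ holds for all density matrices $\rho$. Since $\Dtr$ is half the trace norm of a difference, it is a metric, and the triangle inequality gives
\[
\Dtr(B^{s+1}(\rho),\rho)\le \Dtr\bigl(B^{s+1}(\rho),B(\rho)\bigr)+\Dtr\bigl(B(\rho),\rho\bigr).
\]
The second term is at most $\delta$ by hypothesis. For the first term we write $B^{s+1}(\rho)=B(B^s(\rho))$ and apply \Cref{contracts} to the channel $\Phi=B$. This yields
\[
\Dtr\bigl(B(B^s(\rho)),B(\rho)\bigr)\le \Dtr(B^s(\rho),\rho)\le s\delta.
\]
Summing the two bounds gives $(s+1)\delta$, which completes the induction.

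Alternatively, we could telescope directly, bounding $\Dtr(B^s(\rho),\rho)$ by $\sum_{k=0}^{s-1}\Dtr(B^{k+1}(\rho),B^k(\rho))$. Each summand has the form $\Dtr(B(\sigma),\sigma)$ with $\sigma=B^k(\rho)$, and is therefore at most $\delta$. This route needs only that $B^k(\rho)$ is again a density matrix, which holds because $B$ is a channel; it does not use \Cref{contracts} at all.

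There is no real obstacle here. The only point needing care is that the hypothesis is quantified over all $\rho$, so it may be applied to the intermediate states $B^k(\rho)$. Those states are valid density matrices because $B$ is trace-preserving and completely positive.
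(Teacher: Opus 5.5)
Your proof is correct, and your telescoping alternative is essentially the paper's argument. The paper also inducts on $s$, but it splits the triangle inequality at $B^s(\rho)$ rather than at $B(\rho)$. It writes $\Dtr(B^{s+1}(\rho),\rho)\le \Dtr(B(B^s(\rho)),B^s(\rho))+\Dtr(B^s(\rho),\rho)\le \delta+s\delta$, bounding the first term by the hypothesis applied to $\sigma=B^s(\rho)$. It never invokes \Cref{contracts}.

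Your main route trades the hypothesis at the iterates for the contraction property of the channel $B$. You only use $\Dtr(B(\rho),\rho)\le\delta$ at the single starting state $\rho$. So your argument in fact proves a pointwise version: for a channel $B$ and a fixed $\rho$, $\Dtr(B(\rho),\rho)\le\delta$ already gives $\Dtr(B^s(\rho),\rho)\le s\delta$. The paper's route needs the hypothesis along the whole orbit $B^k(\rho)$. In exchange, it works for any self-map of the density matrices, contractive or not.

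In the paper's application (\Cref{six}, via \Cref{chat}), the hypothesis holds for all $\rho$ and $B$ is a unitary channel, so either version suffices.
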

\begin{proof}
    For $s=0$, $\Dtr(B^0(\rho),\rho)=\Dtr(\rho,\rho)=0$.
    Inductively,
    \begin{align*}
    \Dtr (B^{s+1}(\rho),\rho) &\le \Dtr(B^{s+1}(\rho),B^s(\rho)) + \Dtr(B^s,\rho) \nonumber\\
    &\le \delta + s\cdot \delta = (s+1)\cdot \delta.\qedhere
    \end{align*}
\end{proof}
\begin{lemma}\label{six}
If $\Dtr(B(\rho),\rho)\le \delta$ for
    all $\rho$ then
\[
\Dtr\left(\left(\prod_{i=1}^tA^{a_i}B^{b_i}\right)\rho,\, A^{\sum_i a_i}\rho\right) \le \delta \sum_i b_i.
\]
\end{lemma}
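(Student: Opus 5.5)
I will argue by induction on $t$, peeling off one block $A^{a_i}B^{b_i}$ at a time. The tools are the contraction property of quantum channels (\Cref{contracts}), the triangle inequality for $\Dtr$, and the bound $\Dtr(B^s(\rho),\rho)\le s\delta$ from \Cref{oct30}. First fix the order of application: I read the product $\prod_{i=1}^t A^{a_i}B^{b_i}$ as a composition of channels applied to $\rho$, with the rightmost factor acting first. The argument is symmetric enough that the other convention works the same way, by peeling from the opposite end.

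For $t=0$ both sides equal $\rho$, so the distance is $0$. For the inductive step, set
\[
P=\prod_{i=1}^{t}A^{a_i}B^{b_i}, \qquad P' = P\,A^{a_{t+1}}B^{b_{t+1}},
\]
so that $P'(\rho) = P\bigl(A^{a_{t+1}}(B^{b_{t+1}}(\rho))\bigr)$. Write $\sigma = A^{a_{t+1}}(\rho)$ and split with the triangle inequality through the intermediate state $P(\sigma)$:
\[
\Dtr\bigl(P(A^{a_{t+1}}B^{b_{t+1}}\rho),\,A^{\sum_{i\le t+1}a_i}\rho\bigr)\le \Dtr\bigl(P(A^{a_{t+1}}B^{b_{t+1}}\rho),\,P(\sigma)\bigr)+\Dtr\bigl(P(\sigma),\,A^{\sum_{i\le t}a_i}\sigma\bigr).
\]
The second term is at most $\delta\sum_{i\le t}b_i$ by the induction hypothesis applied to the state $\sigma$; this is legitimate because the hypothesis holds for all $\rho$. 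For the first term, $P\circ A^{a_{t+1}}$ is a composition of channels and hence a channel. By \Cref{contracts} the first term is therefore at most $\Dtr(B^{b_{t+1}}\rho,\rho)$, and \Cref{oct30} bounds this by $b_{t+1}\delta$. Adding the two bounds gives $\delta\sum_{i\le t+1}b_i$.

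The step that needs the most care is the bookkeeping: making sure the channel applied to both arguments in the first term is the same, namely $P\circ A^{a_{t+1}}$, so that contraction applies. If the product is instead read with the leftmost factor acting first, I would peel off the last block applied, i.e.\ the leftmost one. The contraction step then compares $A^{a}B^{b}(\tau)$ with $A^{a}(\tau)$, where $\tau$ is the state produced by the remaining blocks. The second term compares $A^{a}(\tau)$ with $A^{a}A^{\sum a_i}\rho$, which is handled by contraction under $A^{a}$ together with the induction hypothesis. Either way, the only inputs are \Cref{contracts}, \Cref{oct30}, and the triangle inequality.
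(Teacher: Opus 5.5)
Your proof is correct and is essentially the paper's argument: induction on the number of blocks, using \Cref{contracts}, \Cref{oct30} and the triangle inequality. The paper peels off the outermost block $A^{a_1}B^{b_1}$, which is the variant you sketch in your last paragraph, whereas your main version peels off the innermost block and applies the induction hypothesis to the shifted state $A^{a_{t+1}}\rho$; both bookkeepings are valid.
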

\begin{proof}
By \Cref{contracts} and \Cref{oct30},
\[
\left(\prod_{i=1}^tA^{a_i}B^{b_i}\right)\rho
=A^{a_1}B^{b_1}\left(\prod_{i=2}^tA^{a_i}B^{b_i}\right)\rho
\]
is within $b_1\cdot \delta$ of 
\[
A^{a_1}\left(\prod_{i=2}^tA^{a_i}B^{b_i}\right)\rho,
\]
so by induction 
we are done.    
\end{proof}

\begin{theorem}\label{mainthm}
    For each $l$ there is an $n$
    such that there is no synchronizing
    word in $\mathscr M_n$ or $\mathscr M_n'$ of length at most $l$
    over the alphabet $\{A,B_n\}$.
\end{theorem}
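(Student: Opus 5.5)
The plan is to compare an arbitrary short word with the pure power of $A$ that it approximates, and to observe that $A^p$ keeps $\ket{e_2}\bra{e_2}$ and $\ket{e_3}\bra{e_3}$ far apart. Fix $l$. Any word $w\in\{A,B_n\}^*$ of length at most $l$ can be written as $\prod_{i=1}^t A^{a_i}B^{b_i}$ with $\sum_i a_i=p$ and $\sum_i b_i\le l$ (allowing $a_i$ or $b_i$ to be zero). The transition order needs a little care: $\delta^*(\rho,w)$ applies the letters left to right, so the channel acting is the composition in reverse order. However, \Cref{six} is stated for an arbitrary product of blocks, so after relabeling the blocks it applies equally to the reversed composition. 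By \Cref{chat} together with the lemma bounding $|E_{ij}|$ by $\theta$, we have $\Dtr(B_n(\rho),\rho)\le 4\theta=2\pi/n$ for all $\rho$, provided $\theta\le 2/9$. Then \Cref{six} gives
\[
\Dtr\big(\delta^*(\rho,w),\,A^p(\rho)\big)\le \tfrac{2\pi l}{n}.
\]

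Next I compute $A$ on the two basis states. We have $A_1\ket{e_2}=0$ and $A_2\ket{e_2}=\ket{e_3}$, so $A(\ket{e_2}\bra{e_2})=\ket{e_3}\bra{e_3}$. Likewise $A_1\ket{e_3}=0$ and $A_2\ket{e_3}=-\ket{e_2}$, so $A(\ket{e_3}\bra{e_3})=\ket{e_2}\bra{e_2}$. Hence $A^p$ either fixes or swaps these two states, and in both cases $\Dtr\big(A^p(\ket{e_2}\bra{e_2}),A^p(\ket{e_3}\bra{e_3})\big)=\Dtr(\ket{e_2}\bra{e_2},\ket{e_3}\bra{e_3})=1$.

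Now choose $n$ with $2\pi l/n<1/2$ and $\pi/(2n)\le 2/9$; for instance $n>4\pi l$ works. Write $\rho_2=\ket{e_2}\bra{e_2}$ and $\rho_3=\ket{e_3}\bra{e_3}$. By the triangle inequality,
\[
\Dtr\big(\delta^*(\rho_2,w),\delta^*(\rho_3,w)\big)\ge 1-2\cdot\tfrac{2\pi l}{n}>0,
\]
so $w$ is not synchronizing in $\mathscr M_n$. For $\mathscr M_n'$ I must check that $\rho_2$ and $\rho_3$ are states of the subautomaton, i.e.\ reachable from $\ket{e_1}\bra{e_1}$. This holds because $A(\ket{e_1}\bra{e_1})=A_1\ket{e_1}\bra{e_1}A_1^*=\ket{e_2}\bra{e_2}$ (since $A_2\ket{e_1}=0$), and then $A$ applied once more gives $\ket{e_3}\bra{e_3}$. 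The same pair of states therefore witnesses that $w$ is not synchronizing in $\mathscr M_n'$ either.

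The main obstacle is really only bookkeeping. One point is matching the left-to-right order of $\delta^*$ with the product in \Cref{six}. The other is making sure the hypothesis of \Cref{six} holds uniformly in $\rho$, which is exactly what \Cref{chat} provides. Beyond that, the argument reduces to the exact permutation behavior of $A$ on $\{\rho_2,\rho_3\}$.
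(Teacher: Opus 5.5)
Your proof is correct and follows the paper's argument. You approximate the word by a pure power $A^p$ via \Cref{chat} and \Cref{six}, then combine the fact that $A$ swaps $\ket{e_2}\bra{e_2}$ and $\ket{e_3}\bra{e_3}$ (at trace distance $1$) with the triangle inequality. You are also slightly more careful than the paper: you verify the hypothesis $\theta\le 2/9$ of \Cref{chat}, handle the left-to-right order of $\delta^*$, bound $\sum_i b_i\le l$ directly rather than padding $w$ to length exactly $l$, and check that both states are reachable from $\ket{e_1}\bra{e_1}$, so the argument covers $\mathscr M_n'$ as well.
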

\begin{proof}
Let $\epsilon<1/2$ and $\epsilon'=\frac{\epsilon}{4l}$.
Choose $|\theta| \le \epsilon'$. 
Since $\theta=\pi/(2n)$ this amounts to
choosing $n\ge \pi/(2\epsilon')$.

Let $w$ be a purported synchronizing word of length $l$ or less. We may assume $w$ has length exactly $l$ since for any word $v$, the concatenated word $vw$ is synchronizing as well.
By \Cref{six} with $\rho=\ket{e_2}\bra{e_2}$,
there is a certain $j$ with
\begin{align*}
&\Dtr(w(\ket{e_2}\bra{e_2}), A^j(\ket{e_2}\bra{e_2})) \le \epsilon\\
\text{and}\quad &\Dtr(w(\ket{e_3}\bra{e_3}), A^j(\ket{e_3}\bra{e_3})) \le \epsilon.    
\end{align*}
Since $w$ is synchronizing, 
$w(\ket{e_2}\bra{e_2})
=w(\ket{e_3}\bra{e_3})$.
However, $A$ permutes
$\ket{e_2}\bra{e_2}$ and
$\ket{e_3}\bra{e_3}$, so this is impossible:
\begin{align*}
1 &= \Dtr(\ket{e_2}\bra{e_2},\ket{e_3}\bra{e_3})
\\ &=
\Dtr(A^j(\ket{e_2}\bra{e_2}),A^j(\ket{e_3}\bra{e_3}) \\
&\le \Dtr(A^j(\ket{e_2}\bra{e_2}),w(\ket{e_2}\bra{e_2})\\
&+
\Dtr(w(\ket{e_2}\bra{e_2}),A^j(\ket{e_3}\bra{e_3})\\
&\le 2\epsilon<1.\qedhere
\end{align*}
\end{proof}

\begin{lemma}
    For all positive integers $n$, the 
    word $AB^n_nA$ is synchronizing in $\mathscr M_n$ and in $\mathscr M_n'$
    and maps all states to $\ket{e_2}\bra{e_2}$.
\end{lemma}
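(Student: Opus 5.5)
The plan is a direct computation: track the support of an arbitrary density matrix through the three letters of $AB_n^nA$. Since $\delta^*(q,wa)=\delta(\delta^*(q,w),a)$, the word is applied left to right: first $A$, then $n$ applications of $B_n$, then $A$ again. I will show that every density matrix $\rho$ over $\mathbb C^3$ is sent to $\ket{e_2}\bra{e_2}$. This covers $\mathscr M_n$ directly. For $\mathscr M_n'$, its state set is a subset of that of $\mathscr M_n$ and its transitions are the restriction, so the same statement follows.

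\emph{Step 1 (first $A$).} Read off the actions on basis vectors: $A_1e_1=e_2$ and $A_1e_2=A_1e_3=0$, while $A_2e_1=0$, $A_2e_2=e_3$, $A_2e_3=-e_2$. Hence both $A_1\rho A_1^*$ and $A_2\rho A_2^*$ have range inside $\operatorname{span}\{e_2,e_3\}$. So $\sigma:=A(\rho)$ is a density matrix (trace $1$, since $A$ is trace-preserving by \eqref{next-best}) supported on $\operatorname{span}\{e_2,e_3\}$.

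\emph{Step 2 ($B_n^n$).} $B_n^n$ is rotation by $n\theta=\pi/2$ in the $e_1e_2$-plane and fixes $e_3$. Explicitly, $B_n^n e_1=e_2$, $B_n^n e_2=-e_1$ and $B_n^n e_3=e_3$. Therefore $\tau:=B_n^n\sigma (B_n^n)^*$ is a trace-one density matrix supported on $\operatorname{span}\{e_1,e_3\}$. In coordinates, $\tau=\tau_{11}\ket{e_1}\bra{e_1}+\tau_{13}\ket{e_1}\bra{e_3}+\tau_{31}\ket{e_3}\bra{e_1}+\tau_{33}\ket{e_3}\bra{e_3}$ with $\tau_{11}+\tau_{33}=1$.

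\emph{Step 3 (final $A$).} Apply $A$ to $\tau$ term by term. $A_1$ kills $e_3$, so $A_1\tau A_1^*=\tau_{11}\ket{e_2}\bra{e_2}$. $A_2$ kills $e_1$ and sends $e_3\mapsto -e_2$, so $A_2\tau A_2^*=\tau_{33}\ket{e_2}\bra{e_2}$; the sign cancels because it appears on both sides. The cross terms $\tau_{13},\tau_{31}$ are annihilated in both pieces, since each of $A_1$, $A_2$ kills one of $e_1,e_3$. Summing gives $A(\tau)=(\tau_{11}+\tau_{33})\ket{e_2}\bra{e_2}=\ket{e_2}\bra{e_2}$, independent of $\rho$, which proves the word is synchronizing.

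The only point needing care, which I regard as the main obstacle, is Step 3: one must verify that the off-diagonal entries of $\tau$ really vanish under both Kraus operators. This works precisely because $A_1$ and $A_2$ have complementary kernels on $\operatorname{span}\{e_1,e_3\}$. One must also check that the orientation of the rotation $B_n^n$ moves the support from $\{e_2,e_3\}$ onto $\{e_1,e_3\}$ rather than elsewhere, which is immediate from $\cos(\pi/2)=0$ and $\sin(\pi/2)=1$.
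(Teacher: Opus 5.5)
Your proof is correct. It differs from the paper's mainly in being self-contained rather than relying on a citation.

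The paper proves the lemma in one line. It observes that $B_n^n$ is the rotation by $\pi/2$ in the $e_1e_2$-plane, hence independent of $n$. This is the matrix Grudka et al.\ call $B$, and the paper then cites their equation (73), which shows that $ABA$ is synchronizing.

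You make the same first observation (your Step 2), but you then verify the synchronization yourself by tracking supports:
\begin{itemize}
\item $A$ sends every density matrix into $\operatorname{span}\{e_2,e_3\}$.
\item The quarter-turn moves this subspace onto $\operatorname{span}\{e_1,e_3\}$.
\item On that plane $A_1$ and $A_2$ have complementary kernels. This kills the coherences $\tau_{13},\tau_{31}$, and both diagonal weights land on $\ket{e_2}\bra{e_2}$.
\end{itemize}

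The paper's route buys brevity, at the cost of depending on an external source matching its $A$ and its conventions. Your route is independent of that source. It also makes explicit both parts of the claim: it holds for every density matrix, including those with off-diagonal entries, and the common image is specifically $\ket{e_2}\bra{e_2}$.

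A minor remark: your care about reading the word left to right is harmless but not needed here, since $AB_n^nA$ is a palindrome.
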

\begin{proof}
The matrix $B^n_n$ is independent of $n$ and was called $B$ by Grudka et al.~\cite{10.1088/1367-2630/ae16ca} who proved that $ABA$ is synchronizing; see their equation (73).
\end{proof}

\begin{figure*}
\[
\xymatrix@R=1em@C=1em{
{\begin{pmatrix}
	3 & 3 & 0\\
	3 & 3 & 0\\
	0 & 0 & 2
\end{pmatrix}}\ar[dddddd]_B& & & & & &{\begin{pmatrix}
	3 & 0 & 0\\
	0 & 0 & 0\\
	0 & 0 & 1
\end{pmatrix}}\ar[llllll]_B\ar@{..>}@/_2.0pc/[dddlll]\\
&{\begin{pmatrix}
	1 & 1 & 0\\
	1 & 1 & 0\\
	0 & 0 & 2
\end{pmatrix}}\ar@/_/[dddddl]_A\ar[dddd]_B&&&&{\begin{pmatrix}
	1 & 0 & 0\\
	0 & 0 & 0\\
	0 & 0 & 1
\end{pmatrix}}\ar[llll]_B\ar@{..>}@/_2.0pc/[ddll]_A\\
&&&{\begin{pmatrix}
	1 & 1 & 0\\
	1 & 1 & 0\\
	0 & 0 & 0
\end{pmatrix}}\ar@/_2.0pc/[dddll]_A\ar[d]_B&
{\ket{e_1}\bra{e_1}}
\ar[l]_B \ar[dl]_A &  &\\
&&&
{\ket{e_2}\bra{e_2}}
\ar@{<->}[dl]_A\ar[r]_B & {\begin{pmatrix}
	1 & -1 & 0\\
	-1 & 1 & 0\\
	0 & 0 & 0
\end{pmatrix}}\ar[u]^B\ar@/^2.0pc/[ddlll]_A & \\
&&{\ket{e_3}\bra{e_3}}
& \\
&{\begin{pmatrix}
	0 & 0 & 0\\
	0 & 1 & 0\\
	0 & 0 & 1
\end{pmatrix}}\ar[rrrr]_B&&&&{\begin{pmatrix}
	1 & -1 & 0\\
	-1 & 1 & 0\\
	0 & 0 & 2
\end{pmatrix}}\ar[uuuu]_B\ar@/^/[dlllll]_A\\
{\begin{pmatrix}
	0 & 0 & 0\\
	0 & 3 & 0\\
	0 & 0 & 1
\end{pmatrix}}\ar@{<->}[dd]_A\ar[rrrrrr]_B
& &  & & & &{\begin{pmatrix}
	3 & -3 & 0\\
	-3 & 3 & 0\\
	0 & 0 & 2
\end{pmatrix}}\ar[uuuuuu]_B
\\
\\
{\begin{pmatrix}
	0 & 0 & 0\\
	0 & 1 & 0\\
	0 & 0 & 3
\end{pmatrix}}
}\\
\]
\caption{A finite part of $\mathscr{M}_2'$.
    Self-loops are not shown and
each density matrix should be normalized to have trace 1.
}\label{fig:7}
\end{figure*}
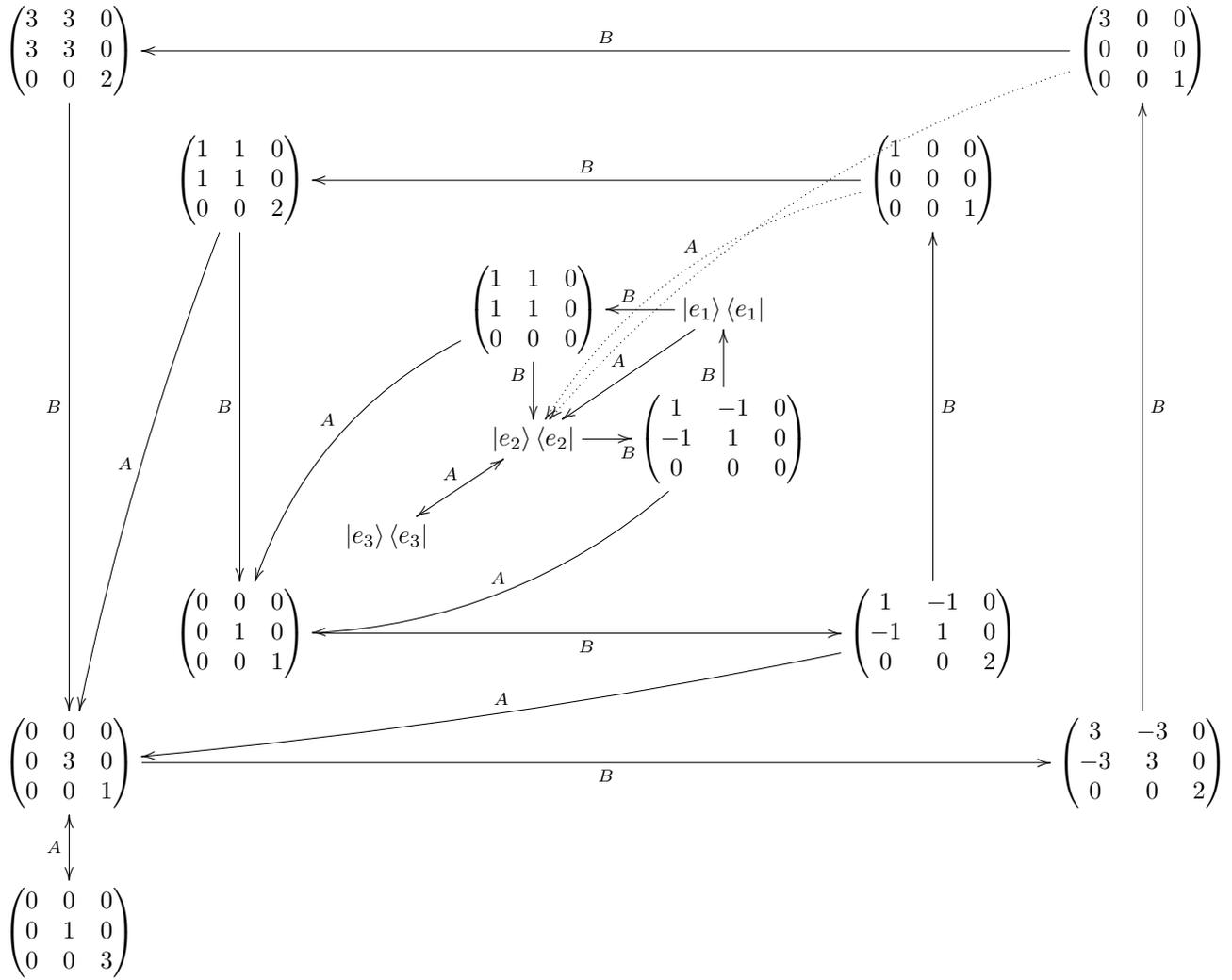

\section{Future work}\label{future}
We mention a couple of questions.
\begin{enumerate}
    \item Do similar results hold for qubits in place of qutrits?
    \item Can we ensure that both letters are given by unital operators, i.e., operators $C$ satisfying $\sum_i C_iC_i^* = \mathbbm{1}$?
\end{enumerate}

\bibliographystyle{IEEEtran}
\bibliography{sample}

\end{document}